\newtheorem{lemma}{Lemma}
\newtheorem{definition}{Definition}
\newtheorem{remark}{Remark}
\newcommand{\CI}{{\mathbb{C}}}
\newcommand{\RI}{{\mathbb{R}}}
\newcommand{\cH}{{\mathcal{H}}}
\newcommand{\cU}{{\mathcal{U}}}
\newcommand{\cW}{{\mathcal{W}}}
\begin{document}

\author{Fabio Benatti}
\affiliation{Department of Physics, University of Trieste $\&$ INFN, Sezione di Trieste,\\
Strada Costiera 11, I-34051-Trieste,Italy\\
Email: benatti@ts.infn.it}

\author{Laure Gouba}
\affiliation{
The Abdus Salam International Centre for
Theoretical Physics (ICTP),\\
 Strada Costiera 11,
I-34151 Trieste Italy \\
Email: lgouba@ictp.it}

\title{Classical limits of quantum mechanics on a non-commutative configuration space}

\begin{abstract}
We consider a model of non-commutative Quantum Mechanics given by
two harmonic oscillators over a non-commutative two dimensional
configuration space. We study possible ways of removing the non-commutativity
based on the classical limit context known as anti-Wick quantization.
We show that removal of non-commutativity from the configuration space and from the
canonical operators are not commuting operations.
\end{abstract}

\maketitle

\section{Introduction}
Since the beginning, the classical limit of quantum mechanics has
been of primary interest. The most suitable context where to study it is
provided by the notion of coherent states as in \cite{klaus}.

In this work, we study the classical limit not of a standard quantum system,
but of two quantum harmonic oscillators whose spatial coordinates are themselves
non-commuting operators with non-commutative parameter $\theta$ \cite{laure}.
For a more physical approach of the problem see \cite{euro}.
One immediately has various possibilities to go to the
limit of classical harmonic oscillators on the commutative configuration space $\RI^2$.
One can first go from the non-commutative configuration space to $\RI^2$ by letting $\theta\to 0$
and then remove  the quantumness by letting $\hbar\to 0$; one can remove quantumness first and get
to a non-quantum system over a non-commutative configuration space and then remove the residual
non-commutativity; finally, one can remove both non-commutativities together.

In order to study these possibilities, we use the quantization/de-quantization schemes known as
anti-Wick quantization \cite{monique}.
In such a scheme we first quantize a $C^*$ algebra of continuous functions with identity by means
of suitably constructed Weyl operators and corresponding Gaussian states that allow to set up a
positive unital map from functions to bounded operators and then de-quantize it by getting back
to functions via another positive unital map.
Combining them together one has a means to first let $\theta\to0$ and then $\hbar\to0$ and vice versa:
the main result is that the two procedures do not commute.

Further, we study a harmonic like dynamics of the two non-commutative quantum oscillators and show
that the asymmetry in the two limits is even stronger; letting $\theta\to0$ first regains the standard
quantum mechanics of two independent harmonic oscillators. However, letting $\hbar\to0$ first does not
leave any dynamics on the non-quantum system over the non-commutative configuration space.

We start by giving a brief review of the anti-Wick quantization in Section \ref{sec2}.
Then, in Section \ref{sec3}, we briefly recall the model of non-commutative quantum harmonic
oscillators in two dimensions; in \ref{sec4} we construct the Weyl operators and Gaussian states
on which the anti-Wick quantization is based and in Section \ref{sec5} we study the various classical limits.
The time evolution and its classical limits will be discussed in Section \ref{sec6}.

\section{Anti-Wick Quantization}
\label{sec2}

In this section we shall shortly review the classical limit of quantum mechanics in the algebraic setting known as
\emph{anti-Wick quantization}; this technique is based on the quasi-classical properties of
\emph{coherent states} whose definition and properties we shall also summarize.
For later extension to the non-commutative
quantum mechanical context,
we shall consider
the standard setting of a classical system with $s$ degrees of freedom described by a phase-space
$\RI^{2s}$ with canonical coordinates and momenta $r=(q,p)\in M$; $q$ and $p$ denote vectors in $\RI^s$
whose components satisfy the canonical Poisson-bracket relations $\{q_i,p_j\}=\delta_{ij}$.
From now, scalar products will be denoted by $(q,p)=\sum_{j=1}^sq_jp_j$ and by $\|r\|^2$ norm of vectors.

Let $\hat{r}=(\hat{q},\hat{p})$ be the $2s$-dimensional vector of quantized coordinates and momenta
operators acting on the Hilbert space $\mathcal{H}$ of square-summable functions over $\mathbbm{R}^s$.
They satisfy the Heisenberg commutation relations $[\hat{r}_i,\hat{r}_j]=i\hbar\Omega_{ij}$,
where $\Omega$ is the $2s\times 2s$ symplectic matrix
\begin{equation}
\label{sympl}
\Omega=\begin{pmatrix}0&1_{s\times s}\\
-1_{s\times s}&0\end{pmatrix}\ ,\quad 1_{s\times s} \quad\hbox{the $s\times s$ identity matrix}\ .
\end{equation}
A useful $C^*$ algebraic description of the quantized system in terms of bounded operators
on $\mathcal{H}$ makes use is of the unitary Weyl operators
\begin{equation}
\label{Weyl}
\hat{W}_\hbar(r)=\exp{\Big(\frac{i}{\hbar}(r, \Omega\hat{r})\Big)}=\exp{\Big(\frac{i}{\hbar}((q,\hat p)-(p,\hat q))\Big)}\ ,
\end{equation}
where $(\cdot\,,\cdot)$ denotes the scalar product over generic $2s$-dimensional vectors.
They satisfy the Weyl algebraic relations
\begin{equation}
\label{Weyl1}
\hat{W}_\hbar(r_1)\hat{W}_\hbar(r_2)=\exp{\Big(\frac{i}{2\hbar}(r_2,\Omega r_1)\Big)}\,\hat{W}_\hbar(r_1+r_2)\ ,
\end{equation}
whence they linearly span an algebra whose norm closure known as
\emph{Weyl algebra}.
One can pass from a real formulation whereby the Weyl operators are labelled by $r\in\RI^{2s}$ to a complex formulation where they are labelled by a complex vector $z\in\CI^s$: this is done by introducing creation and annihilation operators
\begin{equation}
\label{Weyl2}
\hat{a}=\sqrt{\frac{1}{2\alpha\hbar}}\,\hat{q}+i\sqrt{\frac{\alpha}{2\hbar}}\,\hat{p}\ ,\quad
\hat{a}^\dag=\sqrt{\frac{1}{2\alpha\hbar}}\,\hat{q}-i\sqrt{\frac{\alpha}{2\hbar}}\,\hat{p}\ ,
\end{equation}
where $\alpha$ is a suitable parameter such that $\hat{a}^\#$ is a-dimensional and $[\hat{a}_i,\hat{a}^\dag_j]=\delta_{ij}$.
Then, one rewrites
\begin{equation}
\label{Weyl3}
\hat{W}_\hbar(r)=\exp{\Big(z_r\hat{a}^\dag-z^*_r\hat{a}}\Big)=:\hat{W}_\hbar(z_r)\ ,\quad z_r=-\frac{q}{\sqrt{2\alpha\hbar}}-i\,p\,
\sqrt{\frac{\alpha}{2\hbar}}\ .
\end{equation}
Let $\vert 0\rangle_\hbar$ denote the state annihilated by all operators $\hat{a}_j$: $\hat{a}_j\vert 0\rangle_\hbar=0$.
We shall refer to it as to the ground state, which in position representation amounts to the Gaussian state
\begin{equation}
\label{groundqp}
\langle q\vert 0\rangle_\hbar=\psi_0(q)=\frac{1}{(2\pi\alpha\hbar)^{s/4}}\,\exp{\Big(-\frac{\|q\|^2}{2\alpha\hbar}\Big)}\ .
\end{equation}
The coherent states
\begin{equation}
\label{Weyl4}
\vert z_{r}\rangle_\hbar=\hat{W}_\hbar(z_{r})\vert 0\rangle_\hbar={\rm e}^{-\|z_{r}\|^2/2}\, {\rm e}^{z_{r}\hat{a}^\dag}\,\vert 0\rangle_\hbar
\end{equation}
are eigenstates of the vector operator $\hat{a}$ with eigenvalue $z_r\in\CI^s$,
\begin{equation}
\label{coherentstate}
\hat{a}_j\vert z_r\rangle_\hbar=z_r^j\vert z_r\rangle_\hbar=\Big(\frac{q_j}{\sqrt{2\alpha\hbar}}+i\,p_j\,
\sqrt{\frac{\alpha}{2\hbar}}\Big)\vert z_r\rangle_\hbar\ ,
\end{equation}
whence
\begin{equation}
\label{Weyl5}
{}_\hbar\langle0\vert\hat{W}_\hbar(z_{r})\vert 0\rangle_\hbar={\rm e}^{-\|r\|_{\alpha,\hbar}^2}\
,\quad\hbox{where}\quad
\|r\|_{\alpha,\hbar}^2=\frac{1}{4\alpha\hbar}\|q\|^2+\frac{\alpha}{4\hbar}\|p\|^2\ .
\end{equation}

In order to set a useful algebraic setting for the classical limit, we will
consider the $C^*$ algebra of continuous functions over $\RI^{2s}$ which vanish
at infinity to which we add the identity function: we shall denote by $C_\infty$
this commutative $C^*$ algebra.
In this context, a particularly suitable algebraic setting for the classical
limit $\hbar\to0$ is the so-called \emph{anti-Wick} quantization that is based on
the over-completeness of coherent states:
\begin{equation}
\label{Weyl6}
\hat{1}=\frac{1}{(2\pi\hbar)^s}\int_{\RI^{2s}}{\rm d}r\, \vert z_{r}\rangle_\hbar{}_\hbar\langle z_{r}\vert\ ,
\end{equation}
where $\hat{1}$ denotes the identity operator on $\cH$.

Then, one may define two positive maps: a quantization map $\gamma_{\hbar,0}:
C_\infty(\RI^s)\mapsto\cW_\hbar$, given by
\begin{equation}
\label{Wick1}
C_\infty(\RI^{2s})\ni F\mapsto\gamma_{\hbar,0}[F]=:\hat{F}_\hbar \in \cW_\hbar\ ,\quad
\hat{F}_\hbar =\frac{1}{(2\pi\hbar)^s}\int_{\RI^{2s}}{\rm d}r\,F(r)\,
\vert z_{r}\rangle_\hbar{}_\hbar\langle z_{r}\vert\ ,
\end{equation}
which represents the quantization of the classical function $F(r)\in C_\infty(\RI^{2s})$,
and a de-quantization map
$\gamma_{0,\hbar}:\cW_\hbar\mapsto C_\infty(\RI^{2s})$ given by
\begin{equation}
\label{Wick2}
\cW_\hbar\ni \hat{X}\mapsto \gamma_{0,\hbar}[\hat{X}]\in C_\infty(\RI^{2s})\ ,\quad
X(r)={}_\hbar\langle z_{r}\vert \hat{X}\vert z_{r}\rangle_\hbar\ ,
\end{equation}
which \emph{de-quantizes} the operator $\hat{X}$ mapping it back to a function in $C_\infty(\RI^{2s})$.
\medskip

\begin{remark}
\label{rem0}
The quantization and de-quantization maps are positive as they
send positive functions into positive operators and vice versa;
they are unital as they map the identity function in $C_\infty(\RI^{2s})$ into
the identity operator $\hat{1}\in\cW_\hbar$.
\end{remark}
\medskip

Now, one computes
\begin{eqnarray}
\nonumber
\hskip -.5cm
\gamma_{0,\hbar}\circ\gamma_{\hbar,0}[F](r)&=&
\frac{1}{(2\pi\hbar)^s}\int_{\RI^{2s}}{\rm d}r'\, F(r')\,
\left|{}_\hbar\langle z_{r}\vert z_{r'}\rangle_\hbar\right|^2
=
\frac{1}{(2\pi\hbar)^s}\int_{\RI^{2s}}{\rm d}r'\, F(r')\,
\exp{\Big(-\|r'-r\|^2_{\alpha,\hbar}\Big)}\\
\label{Wick3}
&=&
\frac{1}{\pi^s}\int_{\RI^{s}\times\RI^{s}}{\rm d}u\,{\rm d}v\,
F\Big(q+u\sqrt{2\alpha\hbar},p+v\sqrt{\frac{2\hbar}{\alpha}}\Big)\,
\exp{\Big(-(\|u\|^2+\|v\|^2)\Big)}\ ,
\end{eqnarray}
whence the classical limit
\begin{equation}
\label{Wick4}
\lim_{\hbar\to0}\gamma_{0,\hbar}\circ\gamma_{\hbar,0}[F](r)=F(r)
\end{equation}
ensues.
If the classical system evolves in time according to a Hamiltonian
function $H(q,p)$ then the anti-Wick quantization allows one to recover
such an evolution from the quantized one when $\hbar\to0$,
the simplest situation occurs when $H(q,p)$ corresponds to a quantized
$\hat{H}=\sum_j\omega_j\hat{a}^\dag_j\hat{a}_j$. In such a case,
phase-space points $r=(q,p)$ evolve into $r_t=(q_t,p_t)=A_tr$
where $A_t$ is an $s\times s$ symplectic matrix; namely
\begin{equation}
\label{sympl2}
\Omega\, A_t\,=\,A^T_{-t}\,\Omega\ ,
\end{equation}
where $A_{-t}$, respectively $A^T_{-t}$ denote the inverse of the matrix $A_t$,
respectively its transposed.
Furthermore, exactly the same transformation affects the operators in $\hat{r}$
when subjected to the quantized Hamiltonian $\hat{H}$ while the
state $\vert 0\rangle_\hbar$ does not change. Therefore, Weyl operators are
sent into Weyl operators according to
\begin{eqnarray}
\nonumber
\hat{W}(r)\mapsto\cU_t[\hat{W}_\hbar(r)]&=&\hat{U}_t\,
\hat{W}_\hbar(r)\,\hat{U}^\dag_t={\rm e}^{it\hat{H}/\hbar}\,
\hat{W}_\hbar(r)\,{\rm e}^{-it\hat{H}/\hbar}\\
\label{Weyl7}
&=&\exp\Big(\frac{i}{\hbar}(r,\Omega\,A_t\hat{r})\Big)=
\exp\Big(\frac{i}{\hbar}(A_{-t}r,\Omega\,\hat{r})\Big)=
\hat{W}_\hbar(A_{-t}r)\ .
\end{eqnarray}
Then,
\begin{eqnarray}
\nonumber
{}_\hbar\langle z_{r}\vert \hat{U}_t\vert z_{r')}\rangle_\hbar&=&
{}_\hbar\langle z_{r}\vert\hat{U}_t
\hat{W}_\hbar(z_{r'})\hat{U}_t^\dag\vert 0\rangle_\hbar\\
\label{Weyl8}
&=&
{}_\hbar\langle z_{r}\vert z_{A_{-t}r'})\rangle_\hbar\ ,
\end{eqnarray}
so that
\begin{equation}
\label{Wick5}
\gamma_{0,\hbar}\circ\cU_t[\gamma_{\hbar,0}[F]](r)=
\int_{\RI^{2s}}{\rm d}r'\, F_t(r')\, \,
\exp{\Big(-\|r'-r\|^2_{\alpha,\hbar}\Big)}\ \quad
\hbox{where}\ F_t(r')=F(A_tr')\ .
\end{equation}
Then, in such a simple case, the classical limit of the quantum
time-evolution amounts to the classical time-evolution:
\begin{equation}
\label{timeclasslim}
\lim_{\hbar\to0}\gamma_{0,\hbar}\circ\cU_t[\gamma_{\hbar,0}[F]](r)=F_t(r)=F(A_tr)\ .
\end{equation}

\section{Noncommutative Quantum Mechanics}\label{sec3}

 We shortly review the formalism of noncommutative quantum mechanics,
 more details  being available in \cite{laure}.
 We consider the two dimensional noncommutative configuration space,
where the coordinates satisfy the commutation relation
\begin{equation}
 \left[ \hat x_i, \hat x_j\right] = i\theta\epsilon_{ij},
\end{equation}
 with $\theta$ a real positive parameter and $\epsilon_{i,j}$ the
completely antisymmetric tensor with $\epsilon_{1,2} = 1$.
Since, the operators
\begin{equation}
\label{Bop}
b = \frac{1}{\sqrt{2\theta}}(\hat x_1 + i\hat x_2)\ ,\quad
b^\dagger = \frac{1}{\sqrt{2\theta}}(\hat x_1 -i\hat x_2)
\end{equation}
satisfy the commutation relations $[b, b^\dagger] =  1$, one can introduce
a Fock-like vacuum vector $\vert 0\rangle$ such that
$b\vert 0\rangle=0$ and construct a non-commutative
configuration space isomorphic to the boson Fock space
\begin{equation}
 \mathcal{H}_c = \textrm{span}\{|n\rangle \equiv \frac{1}{\sqrt{n!}}
(b^\dagger)^n |0\rangle\}_{n=0}^{n = \infty},
\end{equation}
where the span is taken over the field of complex numbers.

A proper Hilbert space over such non-commutative configuration space
is the Hilbert-Schmidt Banach algebra $\mathcal{H}_q$ of bounded operators
$\psi(\hat x_1, \hat x_2)\in \mathcal{B}(\mathcal{H}_c)$ on $\mathcal{H}_c$ such that
\begin{equation}
\label{eq4}
tr_c(\psi(\hat x_1, \hat x_2)^\dagger\psi(\hat x_1, \hat x_2))< \infty \ .
\end{equation}
The $\textrm{tr}_c$ denotes the trace over non-commutative configuration space and
$\mathcal{B}(\mathcal{H}_c)$ the set of bounded operators on $\mathcal{H}_c$.
This space has a natural inner product and norm
\begin{equation}\label{eq5}
 (\phi(\hat x_1, \hat x_2),\psi(\hat x_1, \hat x_2)) =
\textrm{tr}_c(\phi(\hat x_1, \hat x_2)^\dagger\psi(\hat x_1, \hat x_2))
\end{equation}

Next we introduce the non-commutative Heisenberg algebra
\begin{eqnarray}
\label{NCHA}
 \left[\hat X_i,\:\hat P_j\right] = i\hbar\delta_{i,j},\quad
 \left[\hat X_i,\:\hat X_j\right] = i\theta\epsilon_{i,j},\quad
 \left[\hat P_i, \hat P_j\right] = 0,
 \end{eqnarray}
 where a unitary representation  in terms of the operators
$\hat X_i$ and $\hat P_i$ acting on the quantum Hilbert space (\ref{eq4}) with
the inner product (\ref{eq5}) is
\begin{eqnarray}
 \hat X_i\psi(\hat x_1, \hat x_2) = \hat x_i\psi(\hat x_1,\hat x_2), \quad
 \hat P_i\psi(\hat x_1,\hat x_2) = \frac{\hbar}{\theta}\epsilon_{i,j}
 \left[\hat x_j,\: \psi(\hat x_1,\hat x_2)\right].
\end{eqnarray}
In the above representation, the position acts by left multiplication and
the momentum adjointly.
We shall also consider the system to be equipped with a harmonic oscillator like Hamiltonian operator
\begin{eqnarray}
\label{e7}
 \hat H = \sum_{i =1}^2\left(\frac{1}{2m}\hat P_i^2 +
\frac{1}{2}m\omega^2\hat X_i^2 \right)\ ,
\end{eqnarray}
and refer to the model as two non-interacting non-commutative quantum oscillators.

One can associate to position and momentum operators creation and annihilation-like operators
$\hat A_{i}\,,\,\hat A^\dagger_{i}$, $i=1,2$ that satisfy the algebra
\begin{equation}\label{aacc}
\left[\hat A_{i},\: \hat A_{j}^\dagger \right]
= \delta_{ij};\:
\left[ \hat A_{i}, \: \hat A_{j} \right] = 0 \ .
\end{equation}
The explicit expressions of the $\hat A^\#_i$ are as follows~\cite{laure}
\begin{eqnarray}
\label{ca}
 \hat A_{1} &=& \frac{1}{\sqrt{K_+}}
\left( -\frac{\lambda_+}{\hbar}\hat X_1 - i\hat P_1 -
i\frac{\lambda_+}{\hbar}\hat X_2 +\hat P_2 \right)\ ,\quad
 \hat A_{1}^\dagger = \frac{1}{\sqrt{K_+}}
\left( -\frac{\lambda_+}{\hbar}\hat X_1 + i\hat P_1
+i\frac{\lambda_+}{\hbar}\hat X_2 +\hat P_2 \right)\\
\label{caa}
\hat A_{2} &=& \frac{1}{\sqrt{K_-}}
\left( \frac{\lambda_-}{\hbar}\hat X_1 + i\hat P_1
-i\frac{\lambda_-}{\hbar}\hat X_2 +\hat P_2 \right)\ ,\quad
\hat A_{2}^\dagger = \frac{1}{\sqrt{K_-}}
\left( \frac{\lambda_-}{\hbar}\hat X_1 - i\hat P_1 +
i\frac{\lambda_-}{\hbar}\hat X_2 +\hat P_2 \right)\ ,
\end{eqnarray}
where
\begin{eqnarray}
\label{lambdas}
 \lambda_\pm = \frac{1}{2}
\left( m\omega\sqrt{4\hbar^2 + m^2\omega^2\theta^2} \pm m^2\omega^2\theta\right),\quad
K_\pm =& \lambda_\pm\left(4 \pm \frac{2\lambda_\pm\theta}{\hbar^2} \right).
\end{eqnarray}
Interestingly, the operators $\hat A^\#_j$ can be interpreted as proper
annihilation and creation operators as there is a vector in $\mathcal{H}_q$, that is
a Hilbert-Schmidt operator $\psi_{0}$ such that
\begin{eqnarray}
\hat A_{1}\vert\psi_{0}\rangle = \hat A_2\vert\psi_0\rangle=0 \ ,
\end{eqnarray}
given by~\cite{laure}
\begin{equation}
\label{NNground}
\psi_{0}(\hat x_1, \hat x_2)
 = \exp{\Big(\frac{\beta}{2\theta}(\hat x_1^2 +\hat x_2^2)\Big)}\ ,\quad \beta =
 \ln( 1-\frac{\theta}{\hbar^2}\lambda_-)
= -\ln(1 +\frac{\theta}{\hbar^2}\lambda_+).
\end{equation}
After normalization, the ground state corresponding to $\vert\psi_0\vert$ is
\begin{equation}
\label{ground}
|0,0\rangle = \frac{\vert\psi_0\rangle}{\sqrt{\mathcal{N}}}\ , \quad
\mathcal{N} = \frac{\hbar^4}{2\hbar^2\lambda_--\theta\lambda^2_-}\ .
\end{equation}
Furthermore, the Hamiltonian (\ref{e7}) becomes
\begin{equation}\label{e8}
 \hat H = \frac{\lambda_+}{m}
 \hat A_{1}^\dagger \hat A_{1}
 +\frac{\lambda_-}{m}
 \hat A_{2}^\dagger \hat A_{2}
 +\frac{\lambda_{+} +\lambda_{-}}{2m}
\end{equation}

Clearly, there are two possible quantization and de-quantization schemes
playing possibly together in this context: one is passing from a commutative
to a non-commutative configuration space and back, another one is to pass from
commuting position and momentum operators to non-commuting ones and back.
In order to make the anti-Wick quantization  works, we proceed by extending
the coherent state construction of the previous section to this  non-commutative
quantum system with two degrees of freedom.

\subsection{Gaussian-like states of the non-commutative
quantum harmonic oscillators}
\label{sec4}

In analogy with what we presented in Section~(\ref{sec2}), we introduce
the coordinate vector $r = (x_1,x_2,y_1,y_2)$ and the operator vector
$\hat r =\left(\hat X_1,\hat X_2, \hat P_1, \hat P_2\right)$.
Then, we construct the Weyl-like operators
\begin{equation}
\label{w1}
\hat W_{\hbar,\theta}(r) = \exp{\Big(\frac{i}{\mu_{\hbar,\theta}}
\left(r,\Omega\hat r\right)\Big)}\ ,
\end{equation}
where $\mu_{\hbar,\theta}$ is a parameter with the dimension of an action.
Using the commutation relations~(\ref{NCHA}), the Weyl algebraic composition
law (\ref{Weyl1}) now read
\begin{equation}
\label{Weylalgnc}
\hat W_{\hbar,\theta}(r)\hat W_{\hbar,\theta}(r')
 = \exp{\left(-\frac{i(\hbar+\theta)\mu^2_{\hbar,\theta}}{2}\,(r,\Omega r')\right)} \,
 \hat W_{\hbar,\theta}(r+r');\quad \Omega' =
 \left(
 \begin{array}{cccc}
 0 & 0 & 1 & 0\\
 0 & 0 & 0 & 1\\
 -1 & 0 & 0 & 0 \\
 0 & -1 & 0 & 0
 \end{array}\right)
\end{equation}

As to the parameter $\mu_{\hbar,\theta}$, it will eventually
let vanish with $\hbar\to0$ and $\theta\to0$.
However, there are three possible ways we can reach the full
commutative limit $\hbar= 0 =\theta $:
\begin{enumerate}
\item
by linking $\hbar$ and $\theta$ so that one may consider the
classical limit $\mu_{\hbar,\theta}\to0$;
\item
by letting $\theta\to0$ first so to get to standard quantum mechanics
and then let $\hbar\to0$;
\item
by letting $\hbar\to0$ first so to get to a non-quantum non-commutative
system and then let $\theta\to0$.
\end{enumerate}
In order to explore these three possibilities, we shall choose $\mu_{\hbar,\theta}$ such that
\begin{equation}\label{para}
\lim_{\theta \rightarrow 0}\mu_{\hbar,\theta} = \hbar; \quad
\lim_{\hbar \rightarrow 0}\mu_{\hbar,\theta}=m\omega\theta\ .
\end{equation}
Notice that the latter expression is the only natural constant with
the dimensions of an action when $\hbar=0$ in the model.
A most natural choice is provided by (\ref{lambdas})
\begin{equation}
\label{choice}
\mu_{\hbar,\theta}= \frac{\lambda_+}{m\omega} =
\frac{\sqrt{4\hbar^2 + m^2\omega^2\theta^2} + m\omega\theta}{2}\ ,
\end{equation}
whereas $\lambda_{-}\to0$ when $\hbar\to0$.

By inverting the relations~(\ref{ca}) and~(\ref{caa}),
\begin{eqnarray}
 \hat X_1 &=& \frac{\hbar}{2(\lambda_+ + \lambda_{-})}
 \left(\sqrt{K_{-}}(A_2 + A_2^\dagger) - \sqrt{K_+}(A_1 +A_1^\dagger)\right)\\
 \hat X_2 &=& -\frac{\hbar}{2i(\lambda_+ + \lambda_{-})}
 \left(\sqrt{K_+}(A_1 -A_1^\dagger) + \sqrt{K_{-}}(A_2 -A_2^\dagger)\right)\\
 \hat P_1 &=& \frac{1}{2i(\lambda_+ + \lambda_{-})}
 \left(\lambda_+\sqrt{K_{-}}(A_2 - A_2^\dagger) - \lambda_{-}\sqrt{K_+}(A_1 -A_1^\dagger)\right)\\
 \hat P_2 &=& \frac{1}{2(\lambda_+ + \lambda_{-})}
 \left(\lambda_+\sqrt{K_{-}}(A_2 + A_2^\dagger) + \lambda_{-}\sqrt{K_+}(A_1 +A_1^\dagger)\right)\ ,
\end{eqnarray}
one can rewrite the Weyl operators~(\ref{w1}) in the form
\begin{eqnarray}
\label{w2}
\hat W_{\hbar,\theta}(z_r) = \exp{\Big(z_{1,r}\hat A^\dag_1\,
+\,z_{2,r}\hat A^\dag_2\,-\,z^*_{1,r}\hat A_1\,-\,z^*_{2,r}\hat A_2\Big)}\ ,
\end{eqnarray}
which is similar to~(\ref{Weyl3}), with $z_r=(z_{1,r},z_{2,r})$ a
two dimensional complex vector whose real and imaginary parts are
connected to the real four dimensional vector $r$ by
\begin{equation}
\label{XZ}
\begin{pmatrix}
\mathcal{R}e(z_{1,r})\cr
\mathcal{R}e(z_{2,r})\cr
\mathcal{I}m(z_{1,r})\cr
\mathcal{I}m(z_{2,r})
\end{pmatrix}= \hat{J} r;\quad
\hat{J} = \frac{1}{2\mu_{\hbar,\theta}(\lambda_{+} +\lambda_{-})}
\begin{pmatrix}
 \lambda_{-}\sqrt{K_+} & 0 & 0 & -\hbar\sqrt{K_+}\cr
 -\lambda_+\sqrt{K_{-}} & 0 & 0 & -\hbar\sqrt{K_{-}}\cr
 0 & \lambda_{-}\sqrt{K_+} & \hbar\sqrt{K_+} & 0\cr
 0 & \lambda_{+}\sqrt{K_{-}} & -\hbar\sqrt{K_{-}} & 0
\end{pmatrix}\ .
\end{equation}
By using the ground state~(\ref{ground}) and the relations~(\ref{aacc}),
we now introduce the non-commutative analogues of the coherent states~(\ref{Weyl4}),
\begin{equation}
 \vert z_r\rangle_{\hbar,\theta} = \hat W_{\hbar,\theta}(z_r)\vert 0,0 \rangle\,
 =\,\exp{\Big(-\frac{\|z_r\|^2}{2}\Big)}\, \exp{\Big(z_{1,r}\hat A_1^\dag+z_{2,r}
 \hat A^\dag_2\Big)}\,\vert0,0\rangle\ ,
\end{equation}
where $\|z_r\|^2=|z_{1,r}|^2+|z_{2,r}|^2$.
Exactly as in the case of~(\ref{coherentstate}), because of the algebraic
relations~(\ref{aacc}), it follows that
\begin{equation}
\label{cohNC}
\hat A_1\,\vert z_r\rangle_{\hbar,\theta}\,=\,z_{1,r}\,\vert z_r\rangle_{\hbar,\theta}\
,\quad \hat A_2\vert z_r\rangle_{\hbar,\theta}\,=
\,z_{2,r}\,\vert z_r\rangle_{\hbar,\theta}\ .
\end{equation}
These states are not exactly coherent states as they do not satisfy
the non-commutative analog of minimal indeterminacy~\cite{joe}; however,
they have a Gaussian character and constitute an over-complete set.

\begin{lemma}
The states $\vert z_r\rangle$  satisfy the resolution of identity
\begin{equation}
\frac{1}{\pi^2}\int_{\CI^2} {\rm d}z_r\, \vert z_r\rangle_{\hbar,\theta}{}_{\hbar,\theta}\langle z_r\vert
= \frac{J}{\pi^2}\int_{\RI^4}\,{\rm d}r
\, \vert z_r\rangle_{\hbar,\theta}{}_{\hbar,\theta}
\langle z_r\vert\ = \hat{1},
\end{equation}
where $\displaystyle J=\textrm{Det}\hat{J} =
\frac{\hbar^2}{4\mu_{\hbar,\theta}^4}$ with $\hat J$ the transformation matrix in~(\ref{XZ}).
\end{lemma}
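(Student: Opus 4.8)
The plan is to prove the two equalities in turn, the first being a standard over-completeness statement and the second a change of variables whose Jacobian is the whole content of the lemma. For the first equality I would observe that, by the canonical commutation relations~(\ref{aacc}), the operators $\hat A_i,\hat A_i^\dag$ generate an ordinary two-mode bosonic algebra, so that the vectors $\vert z_r\rangle_{\hbar,\theta}$ built on the ground state $\vert0,0\rangle$ are genuine two-mode coherent states: by~(\ref{cohNC}) they are simultaneous eigenvectors of $\hat A_1,\hat A_2$ with eigenvalues $z_{1,r},z_{2,r}$. (The remark that they are ``not exactly coherent'' refers only to position/momentum indeterminacy and does not affect over-completeness in the $\hat A_i$ variables.) I would therefore invoke the textbook resolution of identity
\begin{equation}
\frac{1}{\pi^2}\int_{\CI^2}{\rm d}z\,\vert z\rangle\langle z\vert=\hat 1\ ,\quad
{\rm d}z={\rm d}\mathcal{R}e(z_1)\,{\rm d}\mathcal{I}m(z_1)\,{\rm d}\mathcal{R}e(z_2)\,{\rm d}\mathcal{I}m(z_2)\ ,
\end{equation}
verified by expanding $\vert z\rangle\langle z\vert$ in the Fock basis $\{(\hat A_1^\dag)^{n_1}(\hat A_2^\dag)^{n_2}\vert0,0\rangle\}$, passing to polar coordinates in each $z_j$, and using orthogonality of the angular phases together with the Gamma-integral normalization of the Fock states.

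For the second equality I would use the linear relation~(\ref{XZ}) between the real and imaginary parts of $z_r$ and the phase-space vector $r$, namely $(\mathcal{R}e\,z_{1,r},\mathcal{R}e\,z_{2,r},\mathcal{I}m\,z_{1,r},\mathcal{I}m\,z_{2,r})^{\rm T}=\hat J\,r$. Since $\hat J$ is an invertible linear map, the change-of-variables formula gives ${\rm d}z_r=\vert\det\hat J\vert\,{\rm d}r=J\,{\rm d}r$ with $J>0$, which converts the $\CI^2$ integral into the $\RI^4$ integral carrying the stated Jacobian prefactor.

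The remaining and most technical step is to evaluate $\det\hat J$ and show it equals $\hbar^2/(4\mu_{\hbar,\theta}^4)$. Here I would exploit the sparsity of the matrix in~(\ref{XZ}): rows $1,2$ occupy only columns $1,4$, while rows $3,4$ occupy only columns $2,3$, so after factoring out the scalar $[2\mu_{\hbar,\theta}(\lambda_++\lambda_-)]^{-4}$ the determinant splits (with even permutation sign) into two $2\times2$ minors. Each minor evaluates to $-\hbar\sqrt{K_+K_-}\,(\lambda_++\lambda_-)$, so that
\begin{equation}
\det\hat J=\frac{\hbar^2\,K_+K_-}{16\,\mu_{\hbar,\theta}^4\,(\lambda_++\lambda_-)^2}\ .
\end{equation}

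The crux, which I expect to be the main obstacle, is the non-obvious identity $K_+K_-=4(\lambda_++\lambda_-)^2$; this is precisely what collapses the expression above to $\hbar^2/(4\mu_{\hbar,\theta}^4)$. I would establish it directly from the definitions~(\ref{lambdas}) by computing $\lambda_+\lambda_-=m^2\omega^2\hbar^2$ and $\lambda_+-\lambda_-=m^2\omega^2\theta$, factoring $K_\pm=2\lambda_\pm(2\pm\lambda_\pm\theta/\hbar^2)$, and simplifying the product to $4m^2\omega^2(4\hbar^2+m^2\omega^2\theta^2)=4(\lambda_++\lambda_-)^2$. Substituting this identity into the displayed determinant yields $J=\det\hat J=\hbar^2/(4\mu_{\hbar,\theta}^4)$ and completes the proof.
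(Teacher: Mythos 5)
Your proposal is correct, and for the resolution of identity itself it coincides with the paper's own argument: the paper's proof checks the matrix elements $\langle n_1,n_2\vert\hat I\vert m_1,m_2\rangle$ of the integral in the Fock basis built from $(\hat A_1^\dag)^{n_1}(\hat A_2^\dag)^{n_2}\vert 0,0\rangle$, uses the eigenvalue relations~(\ref{cohNC}), and concludes by Gaussian integration — which is exactly your polar-coordinate/Gamma-function verification of two-mode over-completeness, phrased slightly differently. Where you genuinely go beyond the printed proof is the second equality: the paper never proves the Jacobian claim $J=\textrm{Det}\,\hat J=\hbar^2/(4\mu_{\hbar,\theta}^4)$ — it works in the complex variables throughout, asserts the value of $J$ in the statement, and leaves the change of variables ${\rm d}z_r=\vert\det\hat J\vert\,{\rm d}r$ implicit. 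Your computation is right in all details, as I checked: the generalized Laplace expansion of the matrix in~(\ref{XZ}) along rows $1,2$ selects only the column pair $\{1,4\}$ with sign $(-1)^{1+2+1+4}=+1$, both $2\times2$ minors equal $-\hbar\sqrt{K_+K_-}\,(\lambda_++\lambda_-)$, giving $\det\hat J=\hbar^2K_+K_-/\bigl(16\mu_{\hbar,\theta}^4(\lambda_++\lambda_-)^2\bigr)$, and the key identity $K_+K_-=4(\lambda_++\lambda_-)^2$ indeed follows from $\lambda_+\lambda_-=m^2\omega^2\hbar^2$ and $\lambda_+-\lambda_-=m^2\omega^2\theta$, since $\bigl(2+\lambda_+\theta/\hbar^2\bigr)\bigl(2-\lambda_-\theta/\hbar^2\bigr)=4+m^2\omega^2\theta^2/\hbar^2$ and $(\lambda_++\lambda_-)^2=m^2\omega^2(4\hbar^2+m^2\omega^2\theta^2)$; moreover $\det\hat J>0$ (note $\lambda_-\theta<\hbar^2$, so $K_->0$), so no absolute-value subtlety arises. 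In short: the paper's proof buys brevity by staying in the complex parametrization, while yours buys a complete verification of the numerical value of $J$ that the paper merely asserts yet relies on later (e.g.\ the prefactor $\hbar^2/(4\mu_{\hbar,\theta}^4\pi^2)$ in the Appendix), so your version is strictly more complete.
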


\begin{proof}
Denote the integral by $\hat{I}$; then,
one checks whether $\langle n_1,n_2\vert\hat I\vert m_1,m_2\rangle=\delta_{n_1,m_1}\delta_{n_2,m_2}$,
where the states
$$
\vert n_1,n_2\rangle=\frac{(\hat A_1^\dag)^{n_1}(\hat A_2^\dag)^{n_2}}{\sqrt{n_1!n_2!}}\,\vert 0,0\rangle
$$
constitute an orthonormal basis in the non-commutative Hilbert space $\mathcal{H}_q$.
Then,~(\ref{cohNC}) yields
$$
\langle n_1,n_2\vert\hat{I}\vert m_1,m_2\rangle=
(z_{1,r}^*)^{n_1}(z_{2,r}^*)^{n_2}(z_{1,r})^{m_1}(z_{2,r})^{m_2}\,e^{-\|z_r\|^2}\ ,
$$
whence the result follows by Gaussian integration.
\end{proof}

\section{The classical limits of the 
non-commutative harmonic oscillators}
\label{sec5}

Following the prescriptions of the anti-Wick 
quantization in Section \ref{sec2}, we start 
by choosing the classical algebra, that we choose as $C_\infty(\RI^4)$ made of continuous functions that vanish at infinity augmented with the identity function.
Then, following~(\ref{Wick1}) and~(\ref{Wick2}), we define the quantization map de-quantization maps.

\begin{definition}
Let $\cW_{\hbar,\theta}$ be the $C^*$ algebra generated by the Weyl operators~(\ref{w1}),
the quantization of $F\in C_\infty(\RI^4)$
will be given by the positive unital map
$\gamma_{(\hbar,\theta),0}:C_\infty(\RI^4)\mapsto \cW_{\hbar,\theta}$ defined by
\begin{equation}
\label{qNC}
C_\infty(\RI^4)\ni F\mapsto\gamma_{(\hbar,\theta),0}[F]=:\hat F_{\hbar,\theta}\in\cW_{\hbar,\theta}\ ,\quad
\hat F_{\hbar,\theta}=\frac{J}{\pi^2}\int_{\RI^4} {\rm d}r\, F(r)\,
\vert z_r\rangle_{\hbar,\theta}{}_{\hbar,\theta}\langle z_r\vert\ ,
\end{equation}
while the de-quantization map by the following positive, unital
map $\gamma_{0,(\hbar,\theta)}:\cW_{\hbar,\theta}\mapsto C_\infty(\RI^4)$
\begin{equation}
\label{dqNC}
\cW_{\hbar,\theta}\ni\hat X\mapsto
\gamma_{0,(\hbar,\theta)}[\hat X]=:X(r)\in C_\infty(\RI^4)\ ,
\quad X(r)={}_{\hbar,\theta}\langle z_r\vert\,\hat X\,\vert z_r\rangle_{\hbar,\theta}\ .
\end{equation}
\end{definition}
\medskip

In order to study the classical limit of the non-commutative quantum oscillators we shall focus upon the following
functions
\begin{equation}
\label{qdqNC}
C_\infty(\RI^4)\ni F\mapsto
F_{\hbar,\theta}=\gamma_{0,(\hbar,\theta)}\circ\gamma_{(\hbar,\theta),0}[F]\in C_\infty(\RI^4)
\end{equation}
that, after some manipulations reported in the Appendix, explicitly reads
\begin{eqnarray}
\label{qdqNC1}
F_{\hbar,\theta}(r) &=& \frac{J}{\pi^2}
\int_{\mathbb{R}^4} {\rm d}r'\, F(r')\,\left\vert \langle z_r\vert z_{r'}\rangle\right\vert^2\\
\nonumber
&=&\frac{1}{\pi^2}
\int_{\mathbb{R}^4}{\rm d}w\,
e^{-\|w\|^2}\,F\Big(x_1+f(w_1,w_2)\ ,\ x_2+f(w_3,w_4)\ ,\ y_1+g(w_3,w_4)\ ,\ y_2-g(w_1,w_2)\Big)\ ,
\label{qdqNC2}
\end{eqnarray}
where
\begin{eqnarray}
\label{f12}
&&
\hskip -1cm
f(x,y)=\frac{\mu_{\hbar,\theta}\sqrt[4]{4\hbar^2 + m^2\omega^2\theta^2}}{2\sqrt{m\omega}\hbar}
 \Big(\frac{x}{\sqrt{\gamma_+}}+\frac{y}{\sqrt{\gamma_{-}}}\Big)\ ,\
g(x,y)=\frac{\mu_{\hbar,\theta}\sqrt{m\omega}\sqrt[4]{4\hbar^2 + m^2\omega^2
\theta^2}}{\sqrt{4\hbar^2 + 2m^2\omega^2\theta^2}}\Big(\frac{x}{\sqrt{\gamma_+}}-
\frac{y}{\sqrt{\gamma_{-}}}\Big)\ ,\\
\label{g12}
&&
\gamma_\pm = \frac{1}{2}\Big(1 \pm\frac{m\omega\theta}
{\sqrt{4\hbar^2 + 2m^2\omega^2\theta^2}}\Big)\ .
\end{eqnarray}

In the following, we compute and discuss various possible limits
in terms of $\hbar$ and $\theta$ or both going to zero.

\subsection{Classical limit: $\mu_{\hbar,\theta}\to0$}
\label{secNCQ2C}

If $\hbar$ and $\theta$ vanish together with the same speed,
that is if $\hbar=\alpha\,\theta$, with $\alpha$ a suitable constant, then
$\mu_{\hbar,\theta}\simeq \hbar$, $\gamma_\pm$ tend to constants
and we get the classical limit
\begin{equation}
\lim_{\hbar\to0}F_{\hbar,\theta}(r)
= \frac{1}{\pi^2}\int_{\mathbb{R}^4}{\rm d}w\,
 e^{-\|w\|^2}\, F(r)=  F(r)\ .
\label{NCQ2C}
\end{equation}
\subsection{Commutative configuration-space limit: $\theta\to0$}
\label{NCQ2Q}
In the limit $\theta\to0$, $\mu_{\hbar,\theta}\to \hbar $, from~(\ref{f12}) and~(\ref{g12}) we get the limit behaviours
$$\gamma_\pm=\frac{1}{2}\ ,\quad f(x,y)=\sqrt{\frac{\hbar}{m\omega}}(x+y)\ ,
\quad g(x,y)=\sqrt{\hbar m\omega}(x-y)\ ,$$
so that
\begin{eqnarray}
\nonumber
 &&
F_\hbar(r) = \lim_{\theta\to0}F_{\hbar,\theta}(r)
= \frac{1}{\pi^2}\int_{\mathbb{R}^4}{\rm d}w\,
 e^{-\|w\|^2}\, \times\\
 \label{wickl0}
&&
\hskip-.5cm\times\,
F\left(x_1 +\sqrt{\frac{\hbar}{m\omega}}(w_1 + w_2),
 x_2 +\sqrt{\frac{\hbar}{m\omega}}(w_3 +w_4),y_1 +\sqrt{\hbar m\omega}(w_3 -w_4),
 y_2 +\sqrt{\hbar m\omega}(w_2 -w_1)\right)\ .
\label{NCQ2Qa}
\end{eqnarray}
This is nothing but the map~(\ref{Wick3}) for two independent harmonic oscillators
with $\alpha=(m\omega)^{-1}$, in fact by a change of variable that we include
in the Appendix, we show that the equation (\ref{NCQ2Qa}) is equivalent
to
\begin{eqnarray}
\nonumber
 &&
F_\hbar(r)=
\frac{1}{\pi^2}\int_{\mathbb{R}^4}{\rm d}u{\rm d}v\,
 e^{-u ^2- v^2}\, \times\\
\label{wickl}
&&
\times\,
F\left(x_1 +\sqrt{\frac{2\hbar}{m\omega}}u_1,
 x_2 +\sqrt{\frac{2\hbar}{m\omega}}u_2,y_1 +\sqrt{2\hbar m\omega}v_1,
 y_2 +\sqrt{2\hbar m\omega}v_2\right)\ .
\end{eqnarray}

The corresponding Weyl operators are
$$
\hat W_\hbar(r)=\exp{\Big(\frac{i}{\hbar}(r,\Omega\hat r)\Big)}\ ,
$$
and the Gaussian ground state
$$
\psi_0(x_1,x_2)=\sqrt[4]{\frac{m\omega}{\pi\hbar}}\, \exp{\Big(-\frac{m\omega}{2\hbar}(x_1^2+x_2^2)\Big)}\ ,
$$
in the $\hat x_{1,2}$ position representation.
The classical limit $\hbar\to0$ then yields
\begin{eqnarray}
\label{NCQ2Q2C}
\lim_{\hbar\rightarrow 0}F_\hbar(r)=F(r)\ ,
\end{eqnarray}
exactly as in the previous  Section.

\subsection{Non-Commutative configuration-space limit: $\hbar\to0$}
\label{NCQ2NC}

In the limit $\hbar\to0$, $\mu_{\hbar,\theta}\to m\omega\theta$,
from~(\ref{f12}) and~(\ref{f12}) we get the limit behaviours
\begin{eqnarray*}
\gamma_\pm&=&\frac{1}{2\sqrt{2}}(\sqrt{2}\pm1)\ ,\quad f(x,y)\to+\infty\quad\hbox{almost everywhere on $\RI^2$}\\
g(x,y)&=& m\omega\sqrt{\theta}\left(\frac{x}{\sqrt{1+\frac{1}{\sqrt{2}}}}-\frac{y}{\sqrt{1-\frac{1}{\sqrt{2}}}}\right)\ ,
\end{eqnarray*}
so that
\begin{eqnarray}
\nonumber
&&
F_\theta(r) = \lim_{\hbar\rightarrow 0}F_{\hbar,\theta}(r)
 = \frac{1}{\pi^2}\int_{\mathbb{R}^4}{\rm d}w\,
 e^{-\|w\|^2}\,\times\\
\label{NCQ2NCa}
&&\hskip .5cm
\times\,F_\infty\left(y_1+ m\omega\sqrt{\theta}
 \left(\frac{w_3}{\sqrt{1+\frac{1}{\sqrt{2}}}} -
 \frac{w_4}{\sqrt{1-\frac{1}{\sqrt{2}}}}\right),
 y_2 + m\omega\sqrt{\theta}
 \left(\frac{w_2}{\sqrt{1-\frac{1}{\sqrt{2}}}} -
 \frac{w_1}{\sqrt{1+\frac{1}{\sqrt{2}}}}\right)\right)\  ,
 \end{eqnarray}
where the function $F_\infty(y_1,y_2)$ denotes the limit
$\displaystyle \lim_{x_{1,2}\to+\infty}F(r)$. Such limit exists and it is not trivial, in general,
because the $C^*$ algebra $C_\infty(\RI^4)$ contains also functions of the form
$$
(f_1(x_1)+c_1)\,(f_2(x_2)+c_2)\,g_1(y_1)\,g_2(y_2)\ ,
$$
where $c_i$ are constants and $f_i(x_i)$, $i=1,2$, vanish when their arguments go to $\pm\infty$.

The resulting expression coincides with the map~(\ref{Wick3}) for the case of a commutative $C^*$ algebra of functions $C_\infty(\RI^2)$,
Weyl operators of the form
\begin{equation}
\label{WeylNC}
\hat W_\theta(r)=\exp{\Big(\frac{i}{m\omega\theta}(y_1\hat x_2-y_2\hat x_1)\Big)}\ ,\qquad [\hat x_1\,,\,\hat x_2]=i\theta\ , r=(y_1,y_2)
\end{equation}
and Gaussian ground state
$$
\psi_0(x_1)=\sqrt[4]{\frac{1}{\pi\theta}}\, \exp{\Big(-\frac{x_1^2}{2\theta}\Big)}\ ,
$$
in the $\hat x_1$-representation where $(\hat x_2\psi)(x_1)=-i\theta\psi'(x_1)$.
Under suitable change of variables, (\ref{NCQ2NCa}) 
is equivalent to
\begin{equation}\label{NCQ2NCa1}
F_\theta(y_1,y_2)=
\frac{1}{\pi}\int_{\mathbb{R}\times\mathbb{R}}dv\;e^{-\|v\|^2}\,
F_\infty(y_1+m\omega\sqrt{2\theta}v_1,\: y_2 + m\omega\sqrt{2\theta}v_2)\ .
\end{equation}
This is the expression ~(\ref{Wick3}) with $\hbar$ substituted by $m\omega\theta$
and $\alpha = (m\omega)^{-1}$.

Then, in analogy with Section \ref{sec2}, one defines two positive maps.
The first map is a configuration space quantization map $\gamma_{\theta,0}: C_\infty(\RI^2)\mapsto\cW_\theta$ from the $C^*$ algebra of continuous functions over $\RI^2$
which vanish at infinity equipped with the identity function into the $C^*$ algebra generated by the Weyl operators~(\ref{WeylNC}),
\begin{equation}
\label{Wick1NCS}
C_\infty(\RI^{2})\ni F\mapsto\gamma_{\theta,0}[F]=:\hat{F}_\theta \in \cW_\theta\ ,\quad
\hat{F}_\theta =\frac{1}{2\pi m\omega\theta}\int_{\RI^{2}}{\rm d}r\,F(r)\,
\vert z_{r}\rangle_\theta{}_\theta\langle z_{r}\vert\ ,
\end{equation}
where
\begin{equation}
\label{cohNCS}
\vert z_{r}\rangle_\theta=\exp{\Big(\frac{i}{m\omega\theta}(y_1\hat x_2-y_2\hat x_1)\Big)}\vert\psi_0\rangle_\theta\ ,\quad
z_r=-\frac{1}{m\omega\sqrt{2\theta}}(y_1+iy_2)\ .
\end{equation}
The second map is a de-quantizing configuration space map  $\gamma_{0,\theta}:\cW_\theta\mapsto C_\infty(\RI^{2})$ given by
\begin{equation}
\label{Wick2NCS}
\cW_\theta\ni \hat{X}\mapsto \gamma_{0,\theta}[\hat{X}]\in C_\infty(\RI^{2})\ ,\quad
X(r)={}_\theta\langle z_{r}\vert \hat{X}\vert z_{r}\rangle_\theta\ ,
\end{equation}
which \emph{de-quantizes} the operator $\hat{X}$ mapping it back to a function in $C_\infty(\RI^{2})$.
By combining the two maps, one finds that $\gamma_{0,\theta}\circ\gamma_{\theta,0}[F](r)$ equals~(\ref{NCQ2NCa1}).

By letting $\theta\to 0$, one removes the non-commutativity of the configuration space and get back to a continuous function, on $\RI^2$ instead of
$\RI^4$:
\begin{equation}
\label{NCQ2NC2C}
\lim_{\theta\rightarrow 0}F_\theta(r)= F_\infty(y_1,y_2)\ .
\end{equation}

We thus see that removal of quantum non-commutativity followed by removal of configuration space non-commutativity does not get back to the initial commutative algebra of continuous functions over $\RI^4$, but on "half" space.
Therefore, the two de-quantizing limits do not commute:
\begin{equation}
\label{NCL}
\lim_{\theta\to0}\lim_{\hbar\to0}\,\neq\,\lim_{\hbar\to0}\lim_{\theta\to0}\ .
\end{equation}
In the next section we study how this non-exchangeability of limits affects as simple a time-evolution as the one generated by the Hamiltonian~(\ref{e7}).

\section{Classical limit of the non-commutative time evolution}
\label{sec6}

We now consider the time-evolution generated by the Hamiltonian~(\ref{e7}), using as dimensional action, not $\hbar$,
but the parameter$\mu_{\hbar,\theta}$ in~(\ref{choice}).
The unitary time-evolutor on the non-commutative Hilbert space $\mathcal{H}_q$ is thus given by
\begin{equation}
\label{time-evolutor}
\hat U_t = \exp{\Big(-\frac{it}{\mu_{\hbar,\theta}}\hat H\Big)}\ .
\end{equation}
Its action on the Weyl operators in the forms~(\ref{w1}) and~(\ref{w2}) is easily computed to be
\begin{eqnarray}
\nonumber
\hat U^\dag_t\,\hat W_{\hbar,\theta}(z)\,\hat U_t&=&\exp{\Bigg(e^{\frac{it}{m\mu_{\hbar,\theta} }\lambda_+}z_1\hat A_1^\dagger+
e^{\frac{it}{m\mu_{\hbar,\theta} }\lambda_-}z_2\hat A_2^\dagger\,-\,e^{-\frac{it}{m\mu_{\hbar,\theta} }\lambda_+}\bar{z}_1\hat A_1
\,-\,e^{-\frac{it}{m\mu_{\hbar,\theta} }\lambda_-}\bar{z}_2\hat A_2\Bigg)}\\
\label{Weylt}
&=& \exp{\Big(r,\Omega A_{t,\hbar,\theta}\hat r\Big)}=
\exp{\Big(A_{-t,\hbar,\theta}r,\Omega\hat r\Big)}=
\hat W_{\hbar,\theta} (r_{-t})\ ,
\end{eqnarray}
where, from symplecticity,
$\Omega A_{t,\hbar,\theta}=A^T_{-t,\hbar,\theta}\Omega$,
and then
\begin{equation}
\label{Weylt1}
r_{-t} = A_{-t,\hbar,\theta}\, r\ , \quad
 A_{t,\hbar,\theta} = \left(
 \begin{array}{cccc}
  \cos\omega_+t & 0 & -\sin\omega_+t & 0\\
  0 & \cos\omega_{-}t & 0 & -\sin\omega_{-}t\\
  \sin\omega_+t & 0 & \cos\omega_+t & 0\\
  0 & \sin\omega_{-}t & 0& \cos\omega_{-}t
 \end{array}
 \right)\ ,
\end{equation}
with the oscillation frequencies given by
\begin{equation}
\label{frequencies}
\omega_\pm =
\frac{\lambda_\pm}{m\mu_{\hbar,\theta}}=\frac{\left( m\omega\sqrt{4\hbar^2 + m^2\omega^2\theta^2} \pm m^2\omega^2\theta\right)}{2m\mu_{\hbar,\theta}}\ .
\end{equation}
Since the ground state $\vert 0,0\rangle$ in~(\ref{ground}) is left invariant by $U_t$, one finds that the time-evolution of the quantized function
in~(\ref{qNC}) is given by
\begin{equation}
\label{qNCt}
\hat F_{\hbar,\theta}(t)=\hat U_t^\dag\,F_{\hbar,\theta}\, \hat U_t\in\cW_{\hbar,\theta}=
\frac{J}{\pi^2}\int_{\RI^4} {\rm d}r\, F(r)\, \vert z_r(-t)\rangle_{\hbar,\theta}{}_{\hbar,\theta}\langle
z_r(-t)\vert=
\frac{J}{\pi^2}\int_{\RI^4} {\rm d}r\, F_t(r)\,
\vert z_r\rangle_{\hbar,\theta}{}_{\hbar,\theta}\langle z_r\vert\ ,
\end{equation}
where it has been used that $Det(A_{t,\hbar,\theta})=1$ and has been set $F_t(r)=F(A_{t,\hbar,\theta}\,r)$.
Then,~(\ref{qdqNC}) yields
\begin{eqnarray}
\nonumber
&&
F_{\hbar,\theta,t}(r)=\gamma_{0,(\hbar,\theta)}\Big[\hat U_t^\dag\gamma_{(\hbar,\theta),0}[F]\hat U_t\Big](r)
= \frac{J}{\pi^2}
\int_{\mathbb{R}^4} {\rm d}r'\, F_t(r')\,\left\vert \langle z_r\vert z_{r'}\rangle\right\vert^2\\
\label{qdqNCt}
&&
=\frac{1}{\pi^2}
\int_{\mathbb{R}^4}{\rm d}w\,
e^{-\|w\|^2}\,F_t\Big(r+h(w)\Big)\ , \quad h(w)=\Big(f(w_1,w_2)\ ,f(w_3,w_4)\ ,g(w_3,w_4)\ ,-g(w_1,w_2)\Big)\ ,
\end{eqnarray}
with the functions $f,g$ as in~(\ref{f12}).
\subsection{Classical limit: $\mu_{\hbar,\theta}\to0$}
\label{NCQ2Cc}

If $\hbar$ and $\theta$ vanish together with the same speed,
that is if $\hbar=\alpha\,\theta$, with $\alpha$ a suitable constant, then
$\mu_{\hbar,\theta}\simeq \hbar$, $\gamma_\pm$ tend to constants
and we get the classical limit as in the time independent case
\begin{equation}
\lim_{\hbar\to0}F_{t,\hbar,\theta}(r)
= \frac{1}{\pi^2}\int_{\mathbb{R}^4}{\rm d}w\,
 e^{-\|w\|^2}\, F_t(r)=  F_t(r)\ .
\label{NCQ2C1}
\end{equation}

\subsection{Commutative configuration-space limit: $\theta\to 0$}

By letting $\theta\to 0$ in~(\ref{qdqNCt}) and thus recovering the commutative quantum mechanics context, from~(\ref{frequencies}) one has
$\lim_{\theta\rightarrow 0}\omega_\pm = \omega $ while for the evolution matrix in~(\ref{Weylt})
\begin{equation}
\label{At0}
 A_t=\lim_{\theta\to0} A_{t,\hbar,\theta}=\left(
 \begin{array}{cccc}
  \cos\omega t & 0 & -\sin\omega t & 0\\
  0 & \cos\omega t & 0 & -\sin\omega t\\
  \sin\omega t & 0 & \cos\omega t & 0\\
  0 & \sin\omega t & 0& \cos\omega t
 \end{array}
 \right)\ .
\end{equation}
Then, in this limit one gets
\begin{equation}
F_{t,\hbar}(r)= \lim_{\theta\rightarrow 0}F_{t,\hbar,\theta}(r) =
 \frac{1}{\pi^2}
 \int_{\mathbb{R}^4}dw\, e^{-||w||^2}\, F\Big( A_{-t}\,(r + h(w))\Big) \ .
\end{equation}
This corresponds to the commutative quantum mechanical time evolution of two identical identical independent harmonic oscillators.

In the classical limit $\hbar\to0$ one obviously recovers the time-evolution of two classical harmonic oscillators
whose canonical coordinates evolve according to the symplectic matrix (\ref{At0}):
\begin{equation}
\label{clt}
 \lim_{\hbar\rightarrow 0}F_{t,\hbar}(r)= F( A_{-t}\, r)\ .
\end{equation}

\subsection{Non-commutative configuration-space limit: 
$\hbar \to 0$}

By letting $\hbar\to0$ in~(\ref{qdqNCt}) and thus going to the non-commutative configuration space context, from~(\ref{frequencies}) one has
$\lim_{\theta\rightarrow 0}\omega_+ = \omega$,  while $\lim_{\theta\rightarrow 0}\omega_- = 0$; thus, for the evolution matrix in~(\ref{Weylt})
\begin{equation}
 B_t=\lim_{\hbar\to0} A_{t,\hbar,\theta}=\left(
 \begin{array}{cccc}
  \cos\omega t & 0 & -\sin\omega t & 0\\
  0 & 1 & 0 & 0\\
  \sin\omega t & 0 & \cos\omega t & 0\\
  0 & 0  & 0& 1
 \end{array}
 \right)\ .
\end{equation}
The previous matrix cannot be used directly in performing the limit in in~(\ref{qdqNCt}); indeed,
we have to take into account that $A_{t,\hbar,\theta}$ mixes the components of the vector $r+h(w)$ and
the function $f(x,y)$ diverges as $1/\hbar$.
However, when $f(x,y)$ multiplies $\sin\omega_-t$ the product vanishes since $\omega_-\simeq \hbar^2$.
Therefore, when $\hbar \to 0$, from~(\ref{NCQ2NCa}), one gets
\begin{eqnarray}
\nonumber
F_{t,\theta}(y_2)=\lim_{\hbar\rightarrow 0}F_{t,\hbar,\theta}(r)
&=& \frac{1}{\pi^2}\int_{\mathbb{R}^4}{\rm d}w\,
 e^{-\|w\|^2}\,F_\infty\left(y_2 + m\omega\sqrt{\theta}
 \left(\frac{w_2}{\sqrt{1-\frac{1}{\sqrt{2}}}} -
 \frac{w_1}{\sqrt{1+\frac{1}{\sqrt{2}}}}\right)\right)\\
\label{NCQ2NCat}
&=& \frac{1}{\sqrt{\pi}}\int_{\mathbb{R}}dv_2\;
  e^{-v_2^2}F_\infty(y_2 +m\omega\sqrt{2\theta}v_2)\ ,
\end{eqnarray}
where the function $F_\infty(y_2)$ denotes the limit
$\displaystyle \lim_{x_1,x_2,y_1\to+\infty}F(r)$ and is effectively a function of $y_{1,2}$ only while $r=(x_1,x_2,y_1,y_2)$.
The only footprint of the non-commutative quantum dynamics is the reduction of the dependence of the initial continuous functions from $r\in\RI^4$ ro $y_2\in\RI$. Indeed, the full classical limit yields
\begin{equation}
 \lim_{\theta\to 0}F_{t,\theta}(y_2)= F_\infty({y_2})\ .
\end{equation}
Therefore, starting with the continuous functions over $\RI^4$, letting the dynamics act and then removing the standard non-commutativity before removing the configuration space non-commutativity one loses track of the time-evolution and even reduces, after the complete classical limit, the domain of definition of the continuous functions from $\RI^4$ to $\RI$.

\section{Conclusion}\label{sec7}

We have considered the classical limit of two independent quantum harmonic oscillators, with
$\hbar$ as quantization parameter, whose position coordinates are themselves non-commuting operators,
with non-commutative deformation parameter $\theta$.
This non-commutative quantum model allows for the construction of creation and annihilation operators with a corresponding
Weyl algebra; we have thus studied the classical limit by means of the so-called anti-Wick quantization scheme that uses
coherent states to map a commutative $C^*$ algebra of continuous functions into the non-commutative $C^*$
algebra generated by the Weyl operators and to map these operators back to continuous functions.

Three possibilities appear to implement the scheme:
\begin{enumerate}
\item
to link $\hbar$ and $\theta$ so that one may consider the
classical limit $\mu_{\hbar,\theta}\to0$;
\item
let $\hbar\to0$ first so to get to a non-quantum non-commutative
system and then let $\theta\to0$;
\item
to let $\theta\to0$ first so to get to standard quantum mechanics
and then let $\hbar\to0$.
\end{enumerate}

In the given model, the first possibility corresponds to 
an anti-Wick quantization procedure which quantizes a $C^*$ algebra
of continuous functions over $\RI^4$ and de-quantizes it back to the 
same algebra.
In the second case, when $\theta\to0$, one gets the Weyl algebra 
of two standard quantum oscillators and then the continuous functions 
over $\RI^4$ when $\hbar\to0$.
Instead, the third possibility is such that $\hbar\to 0$ first yields 
a quantization scheme of a $C^*$ algebra of continuous functions over 
$\RI^2$ (not $\RI^4$) and then $\theta\to0$ maps the Weyl algebra generated 
by the non-commuting position coordinates of the two oscillators back to 
the continuous functions over $\RI^2$.
The non-exchangeability of the two limits
$$
\lim_{\theta\to0}\lim_{\hbar\to0}\,\neq\,\lim_{\hbar\to0}\lim_{\theta\to0}\ ,
$$
becomes even more  evident when one considers the dynamics of 
the non-commutative quantum oscillators generated by a quadratic 
Hamiltonian in the
non-commutative quantum creations and annihilation operators. 
In this case, while the classical limit performed according to 
the first and second possibilities yields the classical 
Hamiltonian dynamics of two identical, independent harmonic oscillators, 
in the third case the non-commutative non-quantum dynamics does 
not survive the classical limit, but for the fact that it further 
reduces to $\RI$ the space of definition of continuous functions 
initially defined on $\RI^4$.

{\bf ACKNOWLEDGMENTS}

L. G. gratefully acknowledges the support 
of the Abdus Salam International Centre for 
Theoretical Physics (ICTP), Trieste.

\section{Appendix: details of the function $F_{\hbar,\theta}$ 
in section \ref{sec5}}

Let us consider
\begin{equation}
\label{App1}
F_{\hbar,\theta}(r) = \frac{J}{\pi^2}
\int_{\mathbb{R}^4} {\rm d}r'\, F(r')\,\left\vert 
\langle z(r)\vert z(r')\rangle\right\vert^2
=\frac{\hbar^2}{4\mu_{\hbar,\theta}^4\pi^2}
 \int_{\mathbb{R}^4} {\rm d}r' \, F(r')\, e^{-E(r,r')}\ ,
 \end{equation}
where
\begin{eqnarray}
\nonumber
E(r,r')&=&|z_1(r)-z_1(r')|^2 +|z_2(r)-z_2(r')|^2\\
\nonumber
&=&\frac{1}{4\mu^2_{\hbar,\theta}(\lambda_+ +\lambda_{-})^2}\times \Big\{
\Big(\lambda_{-}^2K_+ +\lambda_+^2K_{-}\Big)(x_1-x'_1)^2
+\hbar^2\Big(K_+ + K_{-}\Big)(y_2 -y'_2)^2\\
\nonumber
&+& \Big(\lambda_{-}^2K_+ +\lambda_+^2K_{-}\Big)(x_2-x'_2)^2
+\hbar^2\Big(K_+ + K_{-}\Big)(y_1 -y'_1)^2\\
&-& 2\hbar\Big(\lambda_{-}K_{+} - \lambda_{+}K_{-}\Big)(x_1-x_1')(y_2-y_2')
+2\hbar\Big(\lambda_{-}K_{+} - \lambda_{+}K_{-}\Big)(x_2-x_2')(y_1-y_1')
\Big\}\ .
\label{expon}
\end{eqnarray}
First, by setting
$u_i = (x_i -x_i')$ and $v_i = (y_i -y_i')$, $i = 1,2$, one gets
\begin{eqnarray}
\label{hteta}
F_{\hbar,\theta}(r) &=& 
\frac{\hbar^2}{4\mu_{\hbar,\theta}^4\pi^2}
\int_{\mathbb{R}^4} {\rm d}u_1{\rm d}
u_2{\rm d}v_1{\rm d}v_2\,
F(x_1 + u_1, x_2+u_2,y_1+v_1,y_2+v_2)
\,e^{-D(u_1,u_2,v_1,v_2)}\\\nonumber
\label{hteta1}
D(u_1,u_2,v_1,v_2) &=& \frac{1}{4m
\omega\mu_{\hbar,\theta}^2
\sqrt{4\hbar^2 +m^2\omega^2\theta^2}}
\Big\{4m^2\omega^2\hbar^2 u_1^2 + 
(4\hbar^2 + 2m^2\omega^2\theta^2)v_2^2
-4m^2\omega^2\hbar\theta u_1v_2\\
&+&
4m^2\omega^2\hbar^2 u_2^2 + 
(4\hbar^2 + 2m^2\omega^2\theta^2)v_1^2+4m^2
\omega^2\hbar\theta u_2v_1\Big\}\ .
\end{eqnarray}
Next, the change of variables
$\displaystyle
\bar{u}_i = \frac{4m\omega\hbar}{\mu_{\hbar,\theta}}u_i$, $\displaystyle
\bar{v}_i = \frac{2\sqrt{4\hbar^2 + 2m^2\omega^2\theta^2}}{\mu_{\hbar,\theta}}v_i$, $i = 1,2$
yields
\begin{eqnarray}
\nonumber
&&
F_{\hbar,\theta}(r)=\frac{1}{16m^2
\omega^2(4\hbar^2 +2m^2\omega^2\theta^2)}
\int_{\mathbb{R}^4} {\rm d}\bar{u}_1 {\rm d}
\bar{u}_2 {\rm d} \bar{v}_1 {\rm d}\bar{v}_2\,
e^{-G(\bar{u}_1,\bar{u}_2,\bar{v}_1,\bar{v}_2)}\ 
\times\\\nonumber
\\
\nonumber
\\
&&\hskip .5cm
\times
F\Big(x_1+\frac{\mu_{\hbar,\theta}}{2m\omega\hbar}\bar{u}_1, x_2+
\frac{\mu_{\hbar,\theta}}{2m\omega\hbar}\bar{u}_2,
y_1 +\frac{\mu_{\hbar,\theta}}{\sqrt{4\hbar^2 + 2m^2\omega^2\theta^2}}\bar{v}_1,
y_2 +\frac{\mu_{\hbar,\theta}}{\sqrt{4\hbar^2 + 2m^2\omega^2\theta^2}}\bar{v}_2\Big)
\label{exp}
\\
\nonumber
\\
&&
G(\bar{u}_1,\bar{u}_2,\bar{v}_1,\bar{v}_2)=
\frac{\gamma_+(\bar{u}_1 -\bar{v}_2)^2
+\gamma_{-}(\bar{u}_1 +\bar{v}_2)^2
+\gamma_+(\bar{u}_2 +\bar{v}_1)^2
+\gamma_{-}(\bar{u}_2 -\bar{v}_1)^2}{4m\omega\sqrt{4\hbar^2 +2m^2\omega^2\theta^2}}\ ,
\end{eqnarray}
with $\displaystyle \gamma_\pm = \frac{1}{2}\Big(1 \pm\frac{m\omega\theta}
{\sqrt{4\hbar^2 + 2m^2\omega^2\theta^2}}\Big)$.

The expression in equation (\ref{exp}) can be diagonalized by setting
\begin{eqnarray*}
 w_1 &=& \sqrt{\frac{4m\omega}{\gamma_+}}\sqrt[4]{4\hbar^2 + 2m^2\omega^2\theta^2}
 (\bar{u}_1 -\bar{v}_2)\ ,\quad
 w_2 = \sqrt{\frac{4m\omega}{\gamma_{-}}}\sqrt[4]{4\hbar^2 + 2m^2\omega^2\theta^2}
 (\bar{u}_1 + \bar{v}_2)\\
 w_3 &=& \sqrt{\frac{4m\omega}{\gamma_+}}\sqrt[4]{4\hbar^2 + 2m^2\omega^2\theta^2}
 (\bar{u}_2 +\bar{v}_1)\ ,\quad
 w_4 = \sqrt{\frac{4m\omega}{\gamma_{-}}}\sqrt[4]{4\hbar^2 + 2m^2\omega^2\theta^2}
 (\bar{u}_2 -\bar{v}_1)
\end{eqnarray*}
This finally yields
\begin{eqnarray}
&&
F_{\hbar,\theta}(r)=\frac{1}{\pi^2}
\int_{\mathbb{R}^4}{\rm d}w_1{\rm d}w_2{\rm d}w_3{\rm d}w_4\,
e^{-(w_1^2 +w_2^2 +w_3^2 +w_4^2)}\,\times\\
\nonumber
&&\hskip.5cm
\times\, F\Bigg(x_1+\frac{\mu_{\hbar,\theta}\sqrt[4]{4\hbar^2 +
 m^2\omega^2\theta^2}}{2\sqrt{m\omega}\hbar}
 \Big(\frac{w_1}{\sqrt{\gamma_+}}+\frac{w_2}{\sqrt{\gamma_{-}}}\Big)\ ,
 x_2+\frac{\mu_{\hbar,\theta}\sqrt[4]{4\hbar^2 + m^2\omega^2
 \theta^2}}{2\sqrt{m\omega}\hbar}
 \Big(\frac{w_3}{\sqrt{\gamma_+}}+\frac{w_4}{\sqrt{\gamma_{-}}}\Big)\ ,\\
 \nonumber
&&\hskip .5cm
y_1+\frac{\mu_{\hbar,\theta}\sqrt{m\omega}\sqrt[4]{4\hbar^2 + m^2\omega^2
 \theta^2}}{\sqrt{4\hbar^2 + 2m^2\omega^2\theta^2}}
 \Big(\frac{w_3}{\sqrt{\gamma_+}}-\frac{w_4}{\sqrt{\gamma_{-}}}\Big),
 y_2+\frac{\mu_{\hbar,\theta}\sqrt{m\omega}\sqrt[4]{4\hbar^2 + m^2\omega^2
 \theta^2}}{\sqrt{4\hbar^2 + 2m^2\omega^2\theta^2}}
 \Big(\frac{w_2}{\sqrt{\gamma_-}}-\frac{w_1}{\sqrt{\gamma_{+}}}\Big)
 \Bigg)\ .
\end{eqnarray}

Then, one obtains  equation~(\ref{wickl}) by means of the
following change of variables in equation (\ref{NCQ2Qa}):
\begin{eqnarray}
 u_1 = \frac{w_1+w_2}{\sqrt{2}},\quad
 u_2 = \frac{w_3+w_4}{\sqrt{2}},\quad
 v_1 = \frac{w_3 - w_4}{\sqrt{2}},\quad
 v_2 = \frac{w_2 - w_1}{\sqrt{2}}\ .
\end{eqnarray}
The Jacobian for this change of variable
is $J =1$ and $w_1^2 +w_2^2 + w_1^2 +w_2^2 
= u_1^2 +u_2^2 +v_1^2 + v_2^2$.

\end{document}